\newtheorem{theorem}{Theorem}
\newtheorem{lemma}[theorem]{Lemma}
\newcommand{\qed}{\hfill\rule{7pt}{7pt}}
\newenvironment{proof}{\noindent{\bf Proof:}}{\qed\medskip}
\title{A method for hedging in Continuous Time}
\author{Yoav Freund}
\begin{document}

\maketitle

\newcommand{\Vmax}{V^M}
\section{Introduction}
This article gives an analysis of the NormalHedge algorithm in
continuous time. The NormalHedge algorithm is described and analyzed
in discrete time in~\cite{ChaudhuriFrHs09a}. The continuous time
analysis is mathematically cleaner, simpler and tighter than the
discrete time analysis.

To motivate the continuous time framework consider the problem of
portfolio management. Suppose we are managing $N$ different financial
instruments allowed to define a desired distribution
of our wealth among the instruments. We ignore the details of the buy and
sell orders that have to be placed in order to reach the desired
distribution, we also ignore issues that have to do with transaction
costs, buy-sell spreads and the like. We assume that at each moment
the buy and sell prices for a unit of a particular instrument are the
same and that there are no transaction costs.

Our goal is to find an algorithm for managing the portfolio
distribution. In other words, we are looking for a mapping from past
prices to a distribution over the instruments. As we are considering
continuous time, the past can be arbitrarily close to the
present. Formally speaking, we say that the portfolio distribution 
is ``causal'' or ``unanticipating'' to remove the possibly of
defining a portfolio which is a function of the future gains as his
would clearly be a cheat. We are interested in considering continuous
time, because instrument prices can fluctuate very rapidly.

To model this very rapid fluctuation we use a type of stochastic
process called an It\^{o} process to model the log of the price as a
function of time. Intuitively, an It\^{o} process is a linear
combination of a differentiable process and white noise. A more formal
definition is given below. To read more about It\^{o} processes
see~\cite{Oskendal03}. 

Our algorithm and its analysis do not make {\em any} additional
assumption on the price movement of the instruments. Of course, with
no additional assumption we cannot have any guarantees regarding our
future wealth. For example, if the price of all of the instruments
decreases at a particular moment by 10\%, our wealth will necessary
decrease by 10\%, regardless of our wealth distribution. However,
surprisingly enough, we {\em can} give a guarantee on the {\em regret}
associated with our method without any additional assumptions.  Regret
quantifies the difference between our wealth at time $t$ and the
wealth we would have had if we invested all of our money in the best
one of the $N$ instruments. Specifically, denote the log price of
instrument $i$ at time $t$ by $X^i_t$ and assume that the initial unit
price for all instruments is one, i.e. $X^i_0=\log(1)=0$. 
Let $G_t$ be the log of our wealth at time $t$. We define our regret
at time t as
\[
R_t = \max_{i=1,\ldots,N} X^i_t - G_t~.
\]
Intuitively, the regret is large if by investing all of our money in a
particular instrument (whose identity is known only in hind-sight) we would
have made much more money than what we actually have. The main result
of this paper is an algorithm for which the regret is bounded by
$\sqrt{2ct(\ln N + 1)}$ where $c$ is the amount of random fluctuations
(white noise) in the instrument prices. It stands to reason that the bound
depends on the number of instruments $N$, because the larger the
potential impact of price fluctuations. If the instrument prices are
are all simple random walks (brownian motion in continuous time) the expected
price of the {\em best instrument} is proportional to $\sqrt{ct\ln N}$.

We can get potentially tighter bounds if we consider the {\em set} of best
instruments. Suppose we sort $X^i_t$ for a particular time $t$ from
the largest to the smallest. We say that the $\epsilon-quantile$ of
the prices is the value $x$ such that $\lfloor \epsilon N \rfloor$ of
$X^i_t$ are larger than $x$. We prove a bound of $\sqrt{2ct(\ln
  (1/\epsilon) + 1)}$ on the regret of our algorithm relative to the
$\epsilon$-quantile for any $\epsilon>0$. This generalized bound can
be used when we are hedging over an infinite, even uncountably
infinite set of instruments.

Case in point. Suppose that the instruments that we are combining are
themselves portfolios. The set of fixed portfolios over $N>1$ instruments
consists of the $N-1$ dimensional simplex, which is an uncountably
infinite set. Cover's universal porfolios
algorithm~\cite{Cover91,CoverOr96} uses this set as the set of
portfolios to be combined. We can apply our algorithm to this set and
guarantee that our regret relative to the top $\epsilon$-quantile of
fixed rebalanced portfolios is small.

\section{Hedging in Continuous time}
The portfolio management problem is defined as follows.  Let $X^i_t$
for $i=1,\ldots,N$ define the log-prices of $N$ instrumentss as a
function of time. The initial unit price is one, thus
$X^i_0=0$. $X^i_t$ is an It\^{o} process. More specifically, let
$dW_t$ denote an $N$ dimensional Wiener process where each coordinate
is an independent process with unit variance. Then the differential of
the total gain corresponding to the $i$th instrument is:
\begin{equation} \label{eqn:dXi-as-Ito}
dX^i_t = \hat{a}^i(t)dt + \sum_{j=1}^N \hat{b}^{i,j}(t) dW^j_t
\end{equation}
Where $\hat{a}^i(t),\hat{b}^{i,j}(t)$ are adapted (non-anticipatory)
stochastic processes that are It\^{o}-integrable with respect to $W^j_t$.
In mathematical finance terms $\hat{a}_i$ correspond to {\em price drift} 
and $\hat{b}^{i,j}$ correspond to diffusion or {\em price volatility}. 

The volatility of the $i$th instrument at time $t$ is defined as
\begin{equation} \label{eqn:Volatility}
\hat{V}^i(t) = \sum_{j=1}^N \left( \hat{b}^{i,j}(t) \right)^2
\end{equation}
and the maximal volitility at time $t$ is defined as
\begin{equation} \label{eqn:Vmax}
\Vmax(t) = \max_i \hat{V}^i(t)
\end{equation}

An ``aggregating strategy'' is an portfolio management policy that defines how
to distribute the wealth among the $N$ instruments as a function of their
past performance. Mathematically speaking, a policy is a stochastic
process that is an adapted (non-anticipative) function of the past
prices ${X_t^i}_{i=1}^N$. Given an instantiation of the stochastic
processes ${X_t^i}_{i=1}^N$ the aggregating strategy defines $N$
stochastic processes $P_t^i$ such that for all $t$ $P_t^i\geq 0$ and
$\sum_i P_t^i=1$. The cumulative gain of the master algorithm is
defined to be $0$ at $t=0$, for $t \geq 0$ it is defined by the
differential
\[
dG_t = \sum_{j=1}^N P^i_t dX_t^j
\]
The regret of the master algorithm relative to the $i$th instrument is
defined to be zero at $t=0$ and is otherwise defined by the
differential
\[
dR^i_t = dX^i_t-dG_t
\]
And we can combine the last two equations to get:
\begin{equation} \label{eqn:sum_R_i}
\sum_{i=1}^N P^i_t dR^i_t = 0
\end{equation}
As $R_t^i$ is a linear combination of $X_t^i$ it is also an It\^{o}
process and can be expressed as
\begin{equation} \label{eqn:dRi-as-Ito}
dR^i_t = a^i(t)dt + \sum_{j=1}^N b^{i,j}(t) dW^j_t
\end{equation}
where 
\begin{equation}
a^i(t)=\hat{a}^i(t) - \sum_{k=1}^N P^k_t \hat{a}^k(t)
\end{equation}
and
\begin{equation} \label{eqn:bij}
b^{i,j}(t) = \hat{b}^{i,j}(t) - \sum_{k=1}^N P^k_t \hat{b}^{k,j}(t)
\end{equation}

Similarly to $X^i_t$ we define the diffusion rate of $R^i_t$ to be
\begin{equation} \label{eqn:Vi}
V^i(t) = \sum_{j=1}^N \left( b^{i,j}(t) \right)^2
\end{equation}

We prove an upper bound on $V^i(t)$
\begin{lemma}
\[
\forall t, V^i(t) \leq 2\Vmax(t)
\]
\end{lemma}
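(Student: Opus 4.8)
The plan is to recast the statement in Euclidean--geometric language. Fix a time $t$ and suppress it from the notation; for each instrument $k$ regard the diffusion coefficients as a single vector $\hat{b}^k = (\hat{b}^{k,1},\dots,\hat{b}^{k,N})$, so that $\hat{V}^k = \|\hat{b}^k\|^2$ and, by \eqref{eqn:Vmax}, $\Vmax = \max_k \|\hat{b}^k\|^2$. Equation \eqref{eqn:bij} then says that the diffusion vector of the regret $R^i$ is $b^i = \hat{b}^i - \bar{b}$, where $\bar{b} := \sum_k P^k_t \hat{b}^k$ is the $P_t$-weighted average of the instrument diffusion vectors, and the quantity to be bounded, $V^i$ in \eqref{eqn:Vi}, is exactly $\|b^i\|^2 = \|\hat{b}^i - \bar{b}\|^2$.

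The first move is to dispose of the two pieces separately. The term $\|\hat{b}^i\|^2$ is just $\hat{V}^i$, which is at most $\Vmax$ by definition. For the aggregate term $\|\bar{b}\|^2 = \sum_j \bigl(\sum_k P^k_t \hat{b}^{k,j}\bigr)^2$, apply Jensen's inequality to the convex map $x \mapsto x^2$ coordinate by coordinate (equivalently, convexity of the squared Euclidean norm): $\|\bar{b}\|^2 \le \sum_k P^k_t \|\hat{b}^k\|^2 = \sum_k P^k_t \hat{V}^k \le \Vmax$. Combining the two bounds with the elementary inequality $\|u - v\|^2 \le 2\|u\|^2 + 2\|v\|^2$ (a one-line consequence of $2\langle u,v\rangle \le \|u\|^2 + \|v\|^2$) applied with $u = \hat{b}^i$, $v = \bar{b}$ yields $V^i \le 2\hat{V}^i + 2\|\bar{b}\|^2$.

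The step I expect to be the real obstacle is squeezing the constant down to the claimed value: the argument above, taken at face value, delivers $4\Vmax$, whereas the lemma asserts $2\Vmax$. To sharpen it I would exploit the fact that the two split-off pieces cannot both be extremal simultaneously, together with the identity \eqref{eqn:sum_R_i}, which at the level of diffusion coefficients says $\sum_i P^i_t b^{i,j} = 0$ for every $j$, i.e.\ $\bar{b}$ really is the $P_t$-barycenter of the $\hat{b}^k$. That gives the "parallel axis" decomposition $\sum_k P^k_t \|\hat{b}^i - \hat{b}^k\|^2 = \|\hat{b}^i - \bar{b}\|^2 + \sum_k P^k_t \|\hat{b}^k - \bar{b}\|^2$, whence $V^i \le \sum_k P^k_t \|\hat{b}^i - \hat{b}^k\|^2 \le \max_k \|\hat{b}^i - \hat{b}^k\|^2$. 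Turning the last quantity into exactly $2\Vmax$ — rather than the $4\Vmax$ that a crude triangle inequality on $\|\hat{b}^i - \hat{b}^k\|^2$ would give — is the delicate point, and I would look there (possibly invoking a property of the specific policy $P_t$) for whatever extra ingredient closes the factor-of-two gap.
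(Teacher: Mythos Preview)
Your setup is exactly the paper's: write $\hat b^k$ as a vector in $\mathbb{R}^N$, note $\|\hat b^i\|^2\le \Vmax$, bound the barycentre $\bar b=\sum_k P^k_t\hat b^k$ by convexity as $\|\bar b\|^2\le \Vmax$, and then estimate $V^i=\|\hat b^i-\bar b\|^2$. You then correctly observe that from $\|u\|^2,\|v\|^2\le M$ one only gets $\|u-v\|^2\le 4M$, and you ask what extra ingredient brings this down to $2M$.

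There isn't one. The paper's proof simply writes
\[
\Bigl\|\hat b^i-\sum_k P^k_t\hat b^k\Bigr\|_2^2\le 2\Vmax
\]
as if it followed from the two preceding bounds, but that inference is false in general (take $u=-v$). Moreover the constant $2$ in the lemma itself cannot be correct for an arbitrary aggregating strategy: with $N=2$, $\hat b^1=(1,0)$, $\hat b^2=(-1,0)$ and $P_t=(0,1)$ one has $\Vmax=1$ while $b^1=(2,0)$ and $V^1=4$. So the gap you detected is real and lies in the paper, not in your argument; the provable (and sharp) universal constant is $4$, which your first paragraph already delivers. Your parallel-axis detour and the hope that some special property of $P_t$ might help are therefore unnecessary --- no choice of weights can beat the $u=-v$ obstruction.
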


\newcommand{\vbh}{{\bf \hat{b}}}
\newcommand{\vb}{{\bf b}}
\begin{proof}
  We use $\vb^j$ and $\vbh^j$ the $N$ dimensional vectors
  $\left\langle b^{1,j},\ldots,b^{N,j}\right\rangle$ and $\left\langle
    \hat{b}^{1,j},\ldots,\hat{b}^{N,j}\right\rangle$
  respectively.Using this notation we rewrite
  Equations~(\ref{eqn:bij}) and~(\ref{eqn:Vi}) as
\[
\vb^i(t) = \vbh^i(t) - \sum_{k=1}^N P^k_t \vbh^k(t);\;\; 
V^i(t) = \|\vb^i(t) \|_2^2
\]
Equations~(\ref{eqn:Vmax}) and~(\ref{eqn:Vi}) imply that $\|\vbh^i(t)\|_2^2
\leq \Vmax(t)$ for all $i$. It follows that the norm of the convex
combination is also bounded:
\[
\left\| \sum_{k=1}^N P^k_t \vbh^k(t) \right\|_2^2 =
\sum_{k=1}^N P^k_t \left\| \vbh^k(t) \right\|_2^2 \leq
\Vmax(t)
\]
From which it follows that 
\[
\left\|  \vbh^i(t) - \sum_{k=1}^N P^k_t \vbh^k(t) \right\|_2^2 \leq 2 \Vmax(t)
\]
\end{proof}

\section{Normalhedge}

NormalHedge is a particular aggregating strategy which is defined as
follows.

We define a potential function that depends on two variables, $x$ and $c$:
$$
\phi(x,c) = \left\{ \begin{array}{cc} \exp\left( \frac{x^2}{2c} \right) &
(x > 0) \\ 1 & (x \leq 0) \end{array} \right.
$$

We will use the following partial derivatives of $\phi(x,c)$:
$$
\phi'(x,c) \doteq {\partial \over \partial x} \phi(x,c) = \left\{ \begin{array}{cc} \frac{x}{c} \exp\left(
\frac{x^2}{2c} \right) & (x > 0) \\ 0 & (x \leq 0) \end{array} \right.
\quad
\phi''(x,c) \doteq {\partial^2 \over \partial x^2}\phi(x,c) = \left\{ \begin{array}{cc} \left( \frac1{c} +
\frac{x^2}{c^2} \right) \exp\left( \frac{x^2}{2c} \right) & (x > 0) \\ 0 &
(x < 0) \end{array} \right.
$$
and
\[
\phi^c(x,c) \doteq
{\partial \over \partial c} \phi(x,c)  = \left\{ 
\begin{array}{cc} -\frac{x^2}{c^2} \exp\left(\frac{x^2}{2c} \right) & (x > 0) \\
0 & (x \leq 0) \end{array} \right.
\]

The NormalHedge strategy is defined by the following conditions that
should hold for every $t\geq 0$.  If $R^i_t \leq 0$ for all $1 \leq i
\leq N$ then $P^i_t=1/N$. Otherwise $P^i_t$ and $c(t)$ are defined by
the following equations.
\begin{equation} \label{eqn:c(t)}
\frac{1}{N}\sum_{i=1}^N \phi(R^i_t,c(t)) = e
\end{equation}
\begin{equation}
P^i_t = \frac{\phi'(R^i_t,c(t)}{\sum_{j=1}^N \phi'(R^j_t,c(t)}
\end{equation}

\section{Analysis}

\newcommand{\Rg}[2]{R_{{#1},{#2}}}
\newcommand{\eplus}{E_{t}}

We introduce a new notion of regret. For a given time $t$ we order the
cumulative gains $X^i_t$ for $i=1,\ldots,N$ from highest to lowest and
define the {\em regret of the agregating strategy to the top
  $\epsilon$-quantile} to be the difference between $G(t)$ and the
$\lfloor \epsilon N \rfloor$-th element in the sorted list.

\begin{lemma} \label{lem:scale2regret}
At any time $t$, the regret to the best instrument can be bounded as:
\[ \max_{i} \Rg{i}{t} \leq \sqrt{2 c(t) ( \ln N + 1)}  \]
Moreover, for any $0 \leq \epsilon \leq 1$ and any $t$, the regret to the
top $\epsilon$-quantile of instruments is at most
\[ \sqrt{2 c(t) (\ln (1/\epsilon) + 1)}. \]
\end{lemma}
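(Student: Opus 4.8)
The plan is to recognize that Lemma~\ref{lem:scale2regret} is entirely algebraic: it follows from the defining relation~(\ref{eqn:c(t)}) for $c(t)$ together with the two elementary facts that $\phi(x,c)\ge 1$ for every $x$ and that $\phi(x,c)=\exp(x^2/(2c))$ when $x>0$. None of the It\^o apparatus of the previous section enters here, so I would fix $t$ throughout and argue purely about the numbers $R^1_t,\ldots,R^N_t$ and the scalar $c(t)$.

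First I would dispose of the degenerate case in which $R^i_t\le 0$ for all $i$: then $\max_i\Rg{i}{t}\le 0$, the $\lfloor\epsilon N\rfloor$-th largest regret is also $\le 0$, and both asserted bounds hold trivially since their right-hand sides are nonnegative. So assume some $R^i_t>0$, which is exactly the regime in which $c(t)$ is pinned down by~(\ref{eqn:c(t)}). For the first bound, put $m=\max_i\Rg{i}{t}>0$. Every summand in~(\ref{eqn:c(t)}) is at least $1$, and the summand attaining the maximum equals $\phi(m,c(t))=\exp(m^2/(2c(t)))$, so
\[
e \;=\; \frac1N\sum_{i=1}^N\phi(R^i_t,c(t)) \;\ge\; \frac1N\exp\!\left(\frac{m^2}{2c(t)}\right);
\]
rearranging gives $m^2/(2c(t))\le\ln N+1$, i.e.\ $m\le\sqrt{2c(t)(\ln N+1)}$.

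For the $\epsilon$-quantile bound I would run the same argument keeping more of the summands. Since $R^i_t=X^i_t-G_t$ differs from $X^i_t$ by the common shift $-G_t$, the $\lfloor\epsilon N\rfloor$-th largest value of the $R^i_t$ equals the $\lfloor\epsilon N\rfloor$-th largest $X^i_t$ minus $G_t$; call this value $r$, so that $r$ is precisely the regret to the top $\epsilon$-quantile. If $r\le 0$ the bound is immediate; otherwise at least $\lfloor\epsilon N\rfloor$ of the indices $i$ satisfy $R^i_t\ge r>0$, and each such index contributes $\phi(R^i_t,c(t))\ge\exp(r^2/(2c(t)))$ to the sum in~(\ref{eqn:c(t)}). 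Discarding the remaining (each $\ge 1$) summands yields
\[
e \;\ge\; \frac{\lfloor\epsilon N\rfloor}{N}\,\exp\!\left(\frac{r^2}{2c(t)}\right).
\]
Provided the quantile is set up so that the fraction of instruments at or above the reference level is at least $\epsilon$, this gives $r^2/(2c(t))\le\ln(1/\epsilon)+1$; the first bound is then the special case $\epsilon=1/N$.

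The one place that needs care is exactly that last proviso: passing from $\lfloor\epsilon N\rfloor/N$ to $\epsilon$ requires the rounding convention in the definition of the $\epsilon$-quantile to be chosen so that a full $\epsilon$-fraction of the $X^i_t$ lies at or above the competitor value (otherwise one is left with the slightly weaker $\ln(N/\lfloor\epsilon N\rfloor)+1$). I expect this floor/ceiling bookkeeping to be the only subtlety; everything else is the single one-line estimate ``lower-bound the sum in~(\ref{eqn:c(t)}) by the contribution of its top summands,'' used with one summand for the $\max$ bound and with about $\epsilon N$ of them for the quantile bound.
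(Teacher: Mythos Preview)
Your proposal is correct and follows essentially the same route as the paper: bound the sum in~(\ref{eqn:c(t)}) from below by the contribution of the single top term (for the $\max$ bound) or of the top $\epsilon N$ terms (for the quantile bound), then take logarithms. Your treatment is in fact more careful than the paper's, which silently writes $\epsilon N$ in place of $\lfloor \epsilon N\rfloor$ and does not discuss the degenerate all-nonpositive case; the rounding caveat you flag is real but is glossed over in the original as well.
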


\begin{proof}
The first part of the lemma follows from the fact that,
for any $i \in \eplus$,
\begin{eqnarray*}
%&& 
\exp\left( \frac{(\Rg{i}{t})^2}{2c(t)} \right)
\ = \
\exp\left(\frac{([\Rg{i}{t}]_+)^2}{2 c(t)}\right) 
%\\
%& \leq &
\leq
\sum_{i'=1}^N \exp\left( \frac{([\Rg{i'}{t}]_+)^2}{2 c(t)} \right)
\ \leq \ Ne
\end{eqnarray*}
which implies $\Rg{i}{t} \leq \sqrt{2c(t)(\ln N + 1)}$.

For the second part of the lemma, let $\Rg{i}{t}$ denote the regret of
our algorithm to the instrument with the $\epsilon N$-th highest price
at time $t$. Then, the total potential of instruments with regrets
greater than or equal to $\Rg{i}{t}$ is at least:
\[ \epsilon N \exp\left( \frac{([\Rg{i}{t}]_+)^2}{2 c(t)} \right) \leq N e \] 
from which the second part of the lemma follows.
\end{proof}

We quote It\^{o}'s formula, as stated in~\cite{Oskendal03} (Theorem 4.2.1) 
\begin{theorem}[It\^{o}]
Let
\[
dX(t) = udt+vdB(t)
\]
be an n-dimensional It\^{o} process. Let
$g(t,x)=(g_1(t,x),\ldots,g_p(t,x))$ be a $C^2$ map from $[0,\infty)
\times R^n$ into $R^p$. The the process
\[
Y(t,\omega) = g(t,X(t))
\] 
is again an It\^{o} process, whose component number $k$, $Y_k$, is
given by
\[
dY_k=
\frac{\partial g_k}{\partial t}(t,X)dt+
\sum_{i} \frac{\partial g_k}{\partial x_i}(t,X)dX_i+
\frac{1}{2}
\sum_{i,j} \frac{\partial^2 g_k}{\partial x_i \partial x_j}(t,X)dX_i dX_j
\]
where $dB_i dB_j = \delta_{i,j}dt, dB_idt=dtdBi=0.$
\end{theorem}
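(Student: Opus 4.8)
\noindent\emph{Remark on the proof.} The displayed statement is the classical It\^{o} formula, quoted verbatim from~\cite{Oskendal03}, so strictly speaking no proof is owed beyond the citation. Nevertheless, here is the route one would take to establish it. First reduce to the case $p=1$ (the $p$-dimensional claim is just $p$ copies of the scalar one, so fix a component and write $g$ for $g_k$). Next localize: introduce the stopping times $\tau_R=\inf\{t:\ |X(t)|\ge R\ \text{or}\ \int_0^t(|u|+|v|^2)\,ds\ge R\}$ and argue on $[0,\tau_R]$, where we may assume $g,\partial g/\partial t,\partial g/\partial x_i,\partial^2 g/\partial x_i\partial x_j$ are bounded and $u,v$ are bounded; a further standard approximation lets us take $u,v$ to be simple (piecewise-constant adapted) processes. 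At the very end one sends $R\to\infty$, passing to the limit in the ordinary integrals by dominated convergence and in the stochastic integrals by the It\^{o} isometry.

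The core argument is the partition-and-Taylor-expand computation. Choose a partition $0=t_0<\cdots<t_n=t$ of mesh $\delta$, telescope $g(t,X(t))-g(0,X(0))=\sum_m\bigl(g(t_{m+1},X(t_{m+1}))-g(t_m,X(t_m))\bigr)$, and Taylor-expand each increment to second order about $(t_m,X(t_m))$, using the explicit form $\Delta X^{(m)}=u\,\Delta t_m+v\,\Delta B^{(m)}$. The zeroth/first-order terms converge, as $\delta\to 0$, to $\int_0^t\frac{\partial g}{\partial t}\,ds$ and to the It\^{o} integral $\int_0^t\sum_i\frac{\partial g}{\partial x_i}\,dX_i$ (the latter essentially by the definition of the integral for simple integrands together with continuity of $X$). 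The decisive step is the second-order-in-$x$ term: one must show
\[
\sum_m \frac{\partial^2 g}{\partial x_i\partial x_j}(t_m,X(t_m))\,\Delta X_i^{(m)}\Delta X_j^{(m)}\ \longrightarrow\ \int_0^t \frac{\partial^2 g}{\partial x_i\partial x_j}(s,X(s))\,(vv^{\top})_{ij}(s)\,ds
\]
in $L^2$ (or at least in probability). This is precisely where the multiplication table $dB_i\,dB_j=\delta_{ij}\,dt$, $dB_i\,dt=0$ comes from: expanding $\Delta X_i^{(m)}\Delta X_j^{(m)}$ produces a $\Delta t_m^2$ piece and a $\Delta t_m\,\Delta B^{(m)}$ piece, both negligible in the limit, plus the piece $v_i v_j\,\Delta B_i^{(m)}\Delta B_j^{(m)}$, which one compares to $v_i v_j\,\delta_{ij}\,\Delta t_m$ using the moment bound $\mathbb{E}\bigl[(\Delta B_i^{(m)}\Delta B_j^{(m)}-\delta_{ij}\Delta t_m)^2\bigr]=O(\Delta t_m^2)$ and independence of increments across blocks. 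Finally the Taylor remainders are $O(|\Delta X^{(m)}|^3)$ and are controlled by the modulus of continuity of $X$ together with $\sum_m|\Delta X^{(m)}|^2=O_{\mathbb{P}}(1)$, so they vanish.

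The main obstacle is that single displayed convergence: getting the constants and the off-diagonal ($i\ne j$) cross terms right, and, crucially, commuting the limit past the \emph{random} coefficient $\frac{\partial^2 g}{\partial x_i\partial x_j}(t_m,X(t_m))$ (done by uniform continuity of the coefficient on the localized domain and the boundedness of the quadratic variation). Everything else in the argument is bookkeeping. Since the theorem is an exact quotation of \cite{Oskendal03} (Theorem 4.2.1), in the paper we simply invoke it and proceed to apply it to the process $\phi(R^i_t,c(t))$.
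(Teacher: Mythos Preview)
Your reading is correct: the paper does not prove this theorem at all---it merely quotes It\^{o}'s formula from \cite{Oskendal03} (Theorem~4.2.1) and then applies it. Your proposal explicitly recognizes this and matches the paper's approach; the proof sketch you add is standard and accurate, but superfluous for the purposes of this paper.
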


We now give the main theorem, which characterizes the rate of increase
of $c(t)$.
\begin{theorem}
With probability one with respect to the Weiner process
\[
\forall t,\;\;\; \frac{dc(t)}{dt} \leq 6 \Vmax(t)
\]
\end{theorem}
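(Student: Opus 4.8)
The plan is to differentiate the identity~(\ref{eqn:c(t)}) that implicitly defines $c(t)$, using It\^{o}'s formula, and then bound the resulting expression for $dc/dt$ using Lemma~1 together with~(\ref{eqn:c(t)}) again. Throughout I would work at times where $R^i_t>0$ for at least one $i$; at the remaining ``degenerate'' times (including $t=0$) all the derivatives $\phi',\phi'',\phi^c$ at the points $R^i_t$ vanish and, under the natural convention there (e.g.\ $c(t)$ frozen), the bound holds trivially since $\Vmax(t)\ge 0$. In the non-degenerate regime, since $\frac{\partial}{\partial c}\big(\frac1N\sum_i\phi(R^i_t,c)\big)=\frac1N\sum_i\phi^c(R^i_t,c)<0$, the implicit function theorem applied to~(\ref{eqn:c(t)}) expresses $c(t)$ as a smooth function of $(R^1_t,\dots,R^N_t)$, hence as an It\^{o} process. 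I would then apply It\^{o}'s formula to $\Psi_t=\frac1N\sum_i\phi(R^i_t,c(t))$, regarded as a $C^2$ map of the $(N+1)$-dimensional It\^{o} process $(R^1_t,\dots,R^N_t,c(t))$; because $\phi(x,c)$ has no spatial cross-derivatives, only the second-order terms $dR^i_t\,dR^i_t=V^i(t)\,dt$ (by~(\ref{eqn:dRi-as-Ito}),~(\ref{eqn:Vi})) enter, and since $\Psi_t\equiv e$ we have $d\Psi_t=0$, giving
\[
0=\tfrac1N\sum_i\phi'(R^i_t,c(t))\,dR^i_t+\tfrac1N\Big(\sum_i\phi^c(R^i_t,c(t))\Big)\,dc(t)+\tfrac1{2N}\sum_i\phi''(R^i_t,c(t))\,V^i(t)\,dt+(\text{terms in }dc\,dR^i,(dc)^2).
\]

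The next step is to dispose of the unwanted terms. From the definition of the strategy, $\phi'(R^i_t,c(t))=\big(\sum_j\phi'(R^j_t,c(t))\big)P^i_t$, so by~(\ref{eqn:sum_R_i}) the first sum above vanishes identically---both its drift and each of its $dW^j$ parts. The only remaining $dW^j$ contribution to $d\Psi_t$ is then $\tfrac1N\big(\sum_i\phi^c(R^i_t,c(t))\big)$ times the diffusion coefficients of $dc(t)$ (the $dc\,dR^i$ and $(dc)^2$ terms are pure drift), so matching $dW^j$ coefficients and using $\sum_i\phi^c<0$ forces those diffusion coefficients to be zero: $c(t)$ has finite variation. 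Consequently the cross terms drop and the surviving drift identity is
\[
\Big(\sum_i\phi^c(R^i_t,c(t))\Big)\frac{dc(t)}{dt}+\tfrac12\sum_i\phi''(R^i_t,c(t))\,V^i(t)=0,\qquad\text{so}\qquad\frac{dc(t)}{dt}=\frac{\sum_i\phi''(R^i_t,c(t))\,V^i(t)}{2\sum_i|\phi^c(R^i_t,c(t))|}.
\]

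Now I would invoke Lemma~1: since every summand in the numerator is nonnegative, $V^i(t)\le 2\Vmax(t)$ gives $\frac{dc(t)}{dt}\le\Vmax(t)\cdot\frac{\sum_i\phi''(R^i_t,c(t))}{\sum_i|\phi^c(R^i_t,c(t))|}$, and it remains to bound the last ratio by $6$. Writing $u_i=(R^i_t)^2/2c(t)$ for $i\in\eplus$ (the other indices contribute nothing to either sum), one has $\phi''=\tfrac1{c}(1+2u_i)e^{u_i}$ while $|\phi^c|$ is a fixed multiple of $\tfrac1{c}u_i e^{u_i}$, so $\sum\phi''/\sum|\phi^c|$ equals an absolute constant plus a multiple of $\big(\sum_{i\in\eplus}e^{u_i}\big)/\big(\sum_{i\in\eplus}u_ie^{u_i}\big)$. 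The one real inequality needed is thus $\sum_{i\in\eplus}u_ie^{u_i}\ge\sum_{i\in\eplus}e^{u_i}$. For this, note that~(\ref{eqn:c(t)}) says the $|\eplus|$ positive-regret terms satisfy $\sum_{i\in\eplus}e^{u_i}=N(e-1)+|\eplus|$ while the other $N-|\eplus|$ terms equal $1$, whence the average of $e^{u_i}$ over $i\in\eplus$ is at least $e$; applying Jensen's inequality to the convex function $v\mapsto v\ln v-v$ then gives $\sum_{i\in\eplus}\big(e^{u_i}\ln e^{u_i}-e^{u_i}\big)\ge 0$, which is exactly the claim. Plugging this back easily yields $\frac{dc(t)}{dt}\le 6\Vmax(t)$.

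The step I expect to be the main obstacle is the middle one: making It\^{o}'s formula rigorous for the implicitly defined process $c(t)$ and carrying out the cancellation that forces $c(t)$ to be of finite variation, along with a careful treatment of the degenerate times $\{t:\max_iR^i_t\le 0\}$ and the initial condition $c(0)$. By contrast, the final reduction to a scalar inequality is routine once one spots the convexity of $v\ln v-v$ and uses the normalization~(\ref{eqn:c(t)}).
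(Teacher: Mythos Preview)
Your proposal is correct and follows the same route as the paper: apply It\^{o}'s formula to the summed potentials, use~(\ref{eqn:sum_R_i}) together with the definition of $P^i_t$ to kill the first-order term, solve for $dc/dt$, and then bound via Lemma~1 and a scalar inequality under the constraint~(\ref{eqn:c(t)}). The only substantive difference is in the last step: the paper simply asserts that the ratio in~(\ref{eqn:final}) is maximized when every $x_i=\sqrt{2}$ and plugs that value in, whereas your Jensen argument for $v\mapsto v\ln v - v$ (using that the average of $e^{u_i}$ over $i\in\eplus$ is at least $e$) actually proves the needed inequality $\sum_{i\in\eplus} u_i e^{u_i}\ge\sum_{i\in\eplus} e^{u_i}$; your additional care about the implicit definition of $c(t)$, its finite variation, and the degenerate times is likewise absent from the paper's argument.
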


\begin{proof}
  We denote the potential corresponding to the $i$th instrument by potential
  by $\Phi^i_t$, i.e.
\[
\Phi_t^i = \phi(R_t^i,c(t))
\]
Using It\^{o}'s formula we can derive an equation for the differential
$d\Phi^i_t$:
\begin{eqnarray}
d\Phi^i_t & = &
\left[
\frac{dc(t)}{dt}\phi^c(R^i_t,c(t))
+
a^i(t) \phi'(R^i_t,c(t))
+
\frac{1}{2}\left( \sum_{j=1}^N (b^{i,j}(t))^2 \right) \phi''(R^i_t,c(t))
\right] dt
+
\sum_{j=1}^N b^{i,j}(t) \phi'(R^i_t,c(t)) dW^j_t \nonumber \\
&=& 
\left[
\frac{dc(t)}{dt}\phi^c(R^i_t,c(t))
+
\frac{1}{2}\left( \sum_{j=1}^N (b^{i,j}(t))^2 \right) \phi''(R^i_t,c(t))
\right] dt
+ dR^i_t\phi'(R^i_t,c(t)) \label{eqn:phi_i}
\end{eqnarray}

We sum Equation~(\ref{eqn:phi_i}) over all instruments. As $c(t)$ is chosen so that
the average potential is constant, the differential of the average
potential is zero. We thus get:
\begin{eqnarray}
0 &=& \sum_{i=1}^N d\Phi^i_t \\
&=&  \sum_{i=1}^N \left[
\frac{dc(t)}{dt}\phi^c(R^i_t,c(t))
+
\frac{1}{2}\left( \sum_{j=1}^N (b^{i,j}(t))^2 \right) \phi''(R^i_t,c(t))
\right] dt
+
 \sum_{i=1}^N dR^i_t\phi'(R^i_t,c(t))
\end{eqnarray}
From Equation~(\ref{eqn:sum_R_i}) we know that the last term is equal
to zero. Removing this term and reorganizing the equation we arrive at
an expression for the rate of change of $c(t)$:
\[
\frac{dc(t)}{dt} = - \frac
{\sum_{i=1}^N \left( \sum_{j=1}^N (b^{i,j}(t))^2 \right) \phi''(R^i_t,c(t))}
{2\sum_{i=1}^N \phi^c(R^i_t,c(t))}
\]

we plug in the definitions of $V^i(t)$,$\phi^c$ and $\phi'$ to get:
\[
\frac{dc(t)}{dt} = \frac
{\sum_{i; R_t^i>0} V^i(t) \left( \frac1{c(t)} + \frac{(R^i_t)^2}{c(t)^2} \right) \exp\left( \frac{(R^i_t)^2}{2c(t)} \right)  }
{2\sum_{i; R_t^i>0} \frac{(R^i_t)^2}{c(t)^2} \exp\left(\frac{(R^i_t)^2}{2c(t)} \right) }
\]
Multiplying the enumerator and denominator by $c(t)$, using the bound
$V^i(t) \leq \Vmax$ and denoting $x_i \doteq R^i_t/\sqrt{c(t)}$ we get
the inequality
\begin{equation} \label{eqn:final}
\frac{dc(t)}{dt} \leq
\Vmax(t)
\frac{\sum_{i; x_i>0} (1+x_i^2) e^{x_i^2/2}}
{\sum_{i; x_i>0} x_i^2 e^{x_i^2/2}}
\end{equation}
The maximum of the ratio on the right hand side under the constraint 
$(1/N) \sum_{i; x_i>0} e^{x_i^2/2} = e$ is achieved when
$x_i=\sqrt{2}$ for all $i$. Plugging this value back into
equation~(ref{eqn:final}) yields the statement of the theorem.
\end{proof}

\section{references}
There are many good sources for stochastic differential equations and
the It\^{o} calculus. One which I found particularly appealing is a set of
lecture notes for a course on ``Stochastic Calculus, Filtering, and
Stochastic Control'' by Ramon van Handel, available from the web here:

\begin{tt}
http://www.princeton.edu/$\sim$rvan/acm217/ACM217.pdf
\end{tt}

\bibliography{bib} \bibliographystyle{alpha}

\end{document}